\DeclareMathOperator{\sech}{sech}
\newtheorem{theorem}{Theorem}
\begin{document}

\makeatletter
	
\newcommand{\E}{\mathrm{e}\kern0.2pt} 
\newcommand{\D}{\mathrm{d}\kern0.2pt}
\newcommand{\ii}{\kern0.05em\mathrm{i}\kern0.05em}
	
\newcommand{\bareta}{\bar{\eta}} 
\newcommand{\hh}{h(s)}
	
\newcommand{\RR}{\mathbb{R}} 
\newcommand{\CC}{\mathbb{C}}
	
\def\bottomfraction{0.9}

\title{{\bf Solitary waves on rotational flows with an interior stagnation
point}}
	
\author{V. Kozlov$^1$, N. Kuznetsov$^2$, E. Lokharu$^1$}
	
\date{}
	
\maketitle
	
\vspace{-8mm}
	
\begin{center}
$^1${\it Department of Mathematics, Link\"oping University, S--581 83
Link\"oping, Sweden \\ $^2$Laboratory for Mathematical Modelling of Wave
Phenomena, \\ Institute for Problems in Mechanical Engineering, Russian Academy
of Sciences, \\ V.O., Bol'shoy pr. 61, St. Petersburg 199178, RF}
		
\vspace{2mm}
		
E-mail: vladimir.kozlov@liu.se; nikolay.g.kuznetsov@gmail.com;
evgeniy.lokharu@liu.se 
\end{center}
	
\begin{abstract}
The two-dimensional free-boundary problem describing steady gravity waves with
vorticity on water of finite depth is considered. Under the assumption that the
vorticity is a negative constant whose absolute value is sufficiently large, we
construct a solution with the following properties. The corresponding flow is
unidirectional at infinity and has a solitary wave of elevation as its upper
boundary; under this unidirectional flow, there is a bounded domain adjacent to
the bottom, which surrounds an interior stagnation point and is divided into two
subdomains with opposite directions of flow by a critical level curve connecting
two stagnation points on the bottom.
		
\vspace{2mm}
		
\noindent {\bf Keywords:} Steady water waves, constant negative vorticity,
periodic waves, solitary wave, shear flow
\end{abstract}

\section{Introduction}

\setcounter{equation}{0}

In the present paper, we consider the problem describing two-dimensional gravity
waves travelling on a flow of finite depth. For an ideal fluid of constant
density, say water, the effects of surface tension are neglected, whereas the
flow is assumed to be rotational with a constant vorticity; this, according to
observations, is the type of motion commonly occurring in nature (see, for
example, \cite{SwanCummingJames01,Thomas1981} and references cited therein).
Also, it is assumed that the reference frame is moving with the wave so that the
relative velocity field is stationary. Our aim is to consider a new class of
solitary waves each having a cat's-eye---a region of closed streamlines
surrounding a stagnation point. Previously, this kind of behaviour was known
only for periodic waves with vorticity.
	 
The mathematical theory of two-dimensional solitary waves on irrotational flows
goes back to the discovery of John Stcott Russell, who was
the first to observe in 1834 and subsequently to analyse a solitary wave of
elevation; see \cite{Russell1844}. The existence of the latter was justified mathematically by Boussinesq in 1877 and rediscovered by Korteweg and de Vries in 1895. The
existence of solitary waves in the framework of the full water wave problem is
far more complicated and the first proofs were obtained much later (by
\cite{Lavrentiev43} and  
\cite{FriedrichsHyers54}). Modern proofs by \cite{Beale77} and
\cite{Mielke88} use the Nash--Moser implicit function theorem and a
dynamical system approach respectively. All these papers deal only with waves of
small amplitude, whereas \cite{AmickToland81b} constructed
large-amplitude solitary waves using global bifurcation theory and then proved
the existence of a limiting wave of the extreme form, that is, having an angled
crest, see \cite{AmickFraenkelToland82}. All solitary waves considered in these
papers are of positive elevation, symmetric and monotone on each side of the
crest, see \cite{CraigStern88, mcleod_1984}. The corresponding flows, being
irrotatonal, have a simple structure of streamlines: they are unbounded curves
similar (diffeomorphic) to the free surface profile. For all
\textit{unidirectional} waves with vorticity (when the horizontal component $u$
of the relative velocity field has a constant sign everywhere in the fluid), the
latter property is also true. Essentially, this forbids the presence of
critical layers and stagnation points.

The first construction of unidirectional small-amplitude solitary waves
with vorticity was given by \cite{Ter-Krikorov1962}, whereas \cite{Benjamin62}
obtained an approximate form of the wave profile which is the same as 
in the irrotational case. However, the relationships between the wave amplitude, the length scale and the propagation velocities depend on the primary velocity distribution in a complicated way. Much later, Hur (2008) and Groves \& Wahl\'en
(2008) obtained new results on this topic. The latter authors also considered
solitary waves with vorticity in the presence of surface tension
\citet{GrovesWahlen07}. The method used in \cite{GrovesWahlen07,GrovesWahlen08},
known as \textit{spatial dynamics}, is essentially an infinite-dimensional
version of the centre-manifold reduction which is known as spatial dynamics
because is applied to a Hamiltonian system with the horizontal spatial
coordinate playing the role of time. The first use of this method in the
water-wave theory is due to \cite{Kirchgaessner88,
Kirchgaessner82} (see also \cite{Mielke1986,Mielke88,Mielke1991}),
whereas an application of spatial dynamics to three-dimensional waves is given
in \cite{Groves2018} (see also references cited therein). So far, use of spatial
dynamics was restricted exclusively to small-amplitude waves. Recently, 
\cite{Wheeler13} examined waves of large amplitude, but, like in the
irrotational case, all solitary-wave solutions have the same structure of
streamlines, that is, are symmetric and of positive elevation; see \cite{Hur08,
Wheeler15b, KozKuzLok_2015, KozKuzLok_2017}. Thus, looking for a more
complicated geometry of solitary waves, it is natural to consider flows with
stagnation points and critical levels within the fluid domain. As in
\cite{Wahlen09}, by a {\it critical level} we mean a curve, where the horizontal
component of velocity vanishes.

The simplest case of flows with critical levels is that of constant negative
vorticity, and this has several advantages. Flows with constant vorticity are
easier tractable mathematically and they are of substantial practical importance
being pertinent to a wide range of hydrodynamic phenomena (see
\cite{Constantin2016}, p.~196). A new feature of laminar flows with constant
vorticity (compared with irrotational ones) is that there are flows with
critical levels. Moreover, it was shown by \cite{Wahlen09} that small
perturbations of these parallel flows are periodic waves with arrays of
cat's-eye vortices---regions of closed streamlines surrounding stagnation
points. An extension to periodic waves of large amplitude with critical layers
is given in \cite{Constantin2016} (it includes overhanging waves). Despite the
fact that waves considered in \cite{Wahlen09} and \cite{Constantin2016} have
critical layers, the geometry of free surface profiles is still simple:
symmetric about every crest and trough and monotone in between (just like that
of the classical Stokes waves). Examples of more complicated wave profiles are
known (see \cite{EhrnstromEscherWahlen11, Aasen2017, Kozlov2017, Kozlov2019}),
but for their construction the vorticity distribution must be at least linear.
It should be emphasised that all theoretical studies of waves with critical
layers were restricted so far to the periodic setting.

In the 1980s and 1990s, much attention was devoted to numerical computation of
various solitary waves on flows with constant vorticity; see the papers
\cite{Vanden-Broeck1994, VandenBroeck94New} and references
cited therein. In particular, an interesting family of solitary-wave profiles
was obtained in the second of these papers; it approaches a singular one with
trapped circular bulb at the crest which happens as the gravity acceleration
tends to zero. However, no attempt was made to find bottom or interior
stagnation points.

    \begin{figure}[t!]
        \centering
        \includegraphics[scale=0.6]{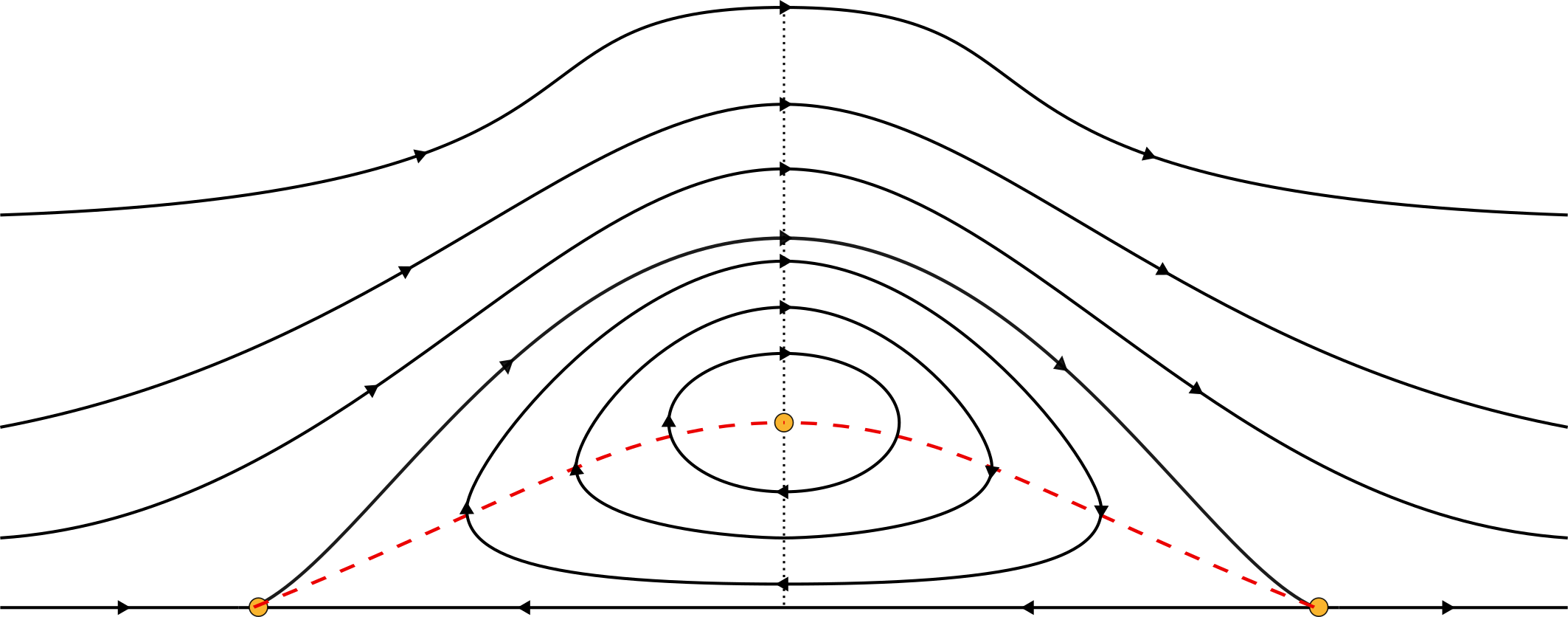}
        \caption{A sketch of the streamline pattern. Solid dots denote stagnation
       points and the dashed line shows the critical level, where the horizontal
       component of velocity vanishes. Another line connecting the bottom stagnation
       points is the critical streamline; the direction of streaming is indicated by 
       arrows.}
        \label{fig:solstreamlines}
    \end{figure}

In the present paper, a new family of solitary waves is constructed for large
negative values of the constant vorticity. All these waves have a remarkable
property: the corresponding flow is unidirectional at both infinities, but there
is a cat's-eye vortex centred below the wave crest; see Figure
\ref{fig:solstreamlines}. (The term was coined by Kelvin in his considerations
of a shear flow having this pattern of streamlines; see \cite{MajdaBertozzi02},
pp.~53--54.) The vortex is bottom-adjacent and separated from the unidirectional
flow above it by a {\it critical streamline} connecting two stagnation points on
the bottom. Every solitary wave under consideration is obtained as a long-wave
limit of Stokes wave-trains; in this aspect, our result is similar to that of
 \cite{AmickToland81a}, who dealt with the irrotational case.
However, Stokes waves has a cat's-eye vortex centred below each crest in our
case. For a sketch of the corresponding streamline pattern see the top Figure~4
below, whereas examples computed numerically are presented in
\cite{Ribeiro2017}, pp.~803--804. As the wavelength goes to infinity, these
vortices does not shrink, which is different from the case of small-amplitude
waves, whose sketch is plotted by \cite{Wahlen09} in his Figure~1. A
sketch of streamlines corresponding to our solitary wave is plotted in
Figure~\ref{fig:solstreamlines}, where the solid dots denote stagnation points.
Moreover, the dashed line shows the critical level along which the horizontal
component of velocity vanishes and the critical streamline located above the
critical level also connects the two bottom stagnation points; the direction of
streaming is indicated by arrows.
	
It should be emphasised that small-amplitude solitary waves constructed in this
paper cannot be captured by applying spatial dynamics directly because in a
certain sense the problem turns into a singular one for large values of the
vorticity. Thus, an appropriate scaling and a careful analysis are required
prior spatial dynamics can be used.
	
The plan of the paper is as follows. Statement of the problem and formulation of
main result are given in Subsection 1.1. Then, in Section 2, the problem is
scaled and reformulated in a suitable way. After that, in Section 3, it is
reduced to a finite-dimensional Hamiltonian system and Theorem 3 provides the
existence of solitary waves. Then the main Theorem 1 is proved in Section 4.

\subsection{Statement of the problem and formulation of the main result}
	
Let an open channel of uniform rectangular cross-section be bounded from below
by a horizontal rigid bottom and let water occupying the channel be bounded from
above by a free surface not touching the bottom. The surface tension is
neglected on the free surface, whereas the pressure is assumed to be cons tant
there. In appropriate Cartesian coordinates $(X,Y)$, the bottom coincides with
the $X$-axis and gravity acts in the negative $Y$-direction. We choose the frame
of reference so that the velocity field is time-independent as well as the
unknown free-surface profile. The latter is assumed to be the graph of $Y = \eta
(X)$, $X\in \RR$, where $\eta$ is a positive function. The water motion is
supposed to be two-dimensional and rotational; combining this and the
incompressibility of water, we seek the velocity field in the form $(\psi_Y,
-\psi_X)$, in which case $\psi (X,Y)$ is referred to as the stream function.

It is convenient to use the non-dimensional variables proposed by  \cite{KeadyNorbury78}. Namely, lengths and velocities are scaled to
$(Q^2/g)^{1/3}$ and $(Q g)^{1/3}$ respectively, where $Q$ is the mass flux and
$g$ is the acceleration due to gravity respectively. Thus, $Q$ and $g$ are
equal to unity in this variables.	Since the surface tension is neglected, the
pair $(\psi , \eta)$ must satisfy the following free-boundary problem:\\[-6mm]
\begin{subequations}\label{eqn:streamsystem}
	\begin{alignat}{2}
	 \psi_{XX} + \psi_{YY} - b & = 0 &\qquad& \text{for } 0 < Y < \eta (X), \label{lapp}
     \\ \psi (X,Y) & = 0 &\quad& \text{on} \ Y=0, \label{bcp} \\ 
	\psi (X,Y) & = 1 &\quad& \text{on} \ Y=\eta, \label{kcp} \\ 
	|\nabla \psi (X,Y)|^2 + 2 Y & = R &\quad& \text{on} \ Y=\eta (X). \label{bep}
	\end{alignat}
\end{subequations}
Here $R$ is a constant considered as problem's parameter; it is referred to as
the total head or the Bernoulli constant (see, for example,
\cite{KeadyNorbury78}). This statement (with a general vorticity distribution)
has long been known and its derivation from the governing equations and the
assumptions about the boundary behaviour of water particles can be found in
\cite{ConstantinStrauss04}.
	
A solution of problem \eqref{lapp}--\eqref{bep} defines a solitary wave provided
the following relations hold 
\begin{equation}
\eta(X) \to h \ \ \text{and} \ \ |\psi_X (X,Y)| \to 0 \ \ \text{as} \ \ X \to
\pm \infty . \label{sw}
\end{equation}
Here $h$ is a constant, which coincides with the depth of a certian laminar flow
at infinity. A sketch of the profile, that is typical for a solitary wave, is
shown in Figure 2. It should be noted that the flow at infinity is not uniform
as it is in the irrotational case.

\begin{figure}[!t]
	\centering
	\includegraphics[scale=0.6]{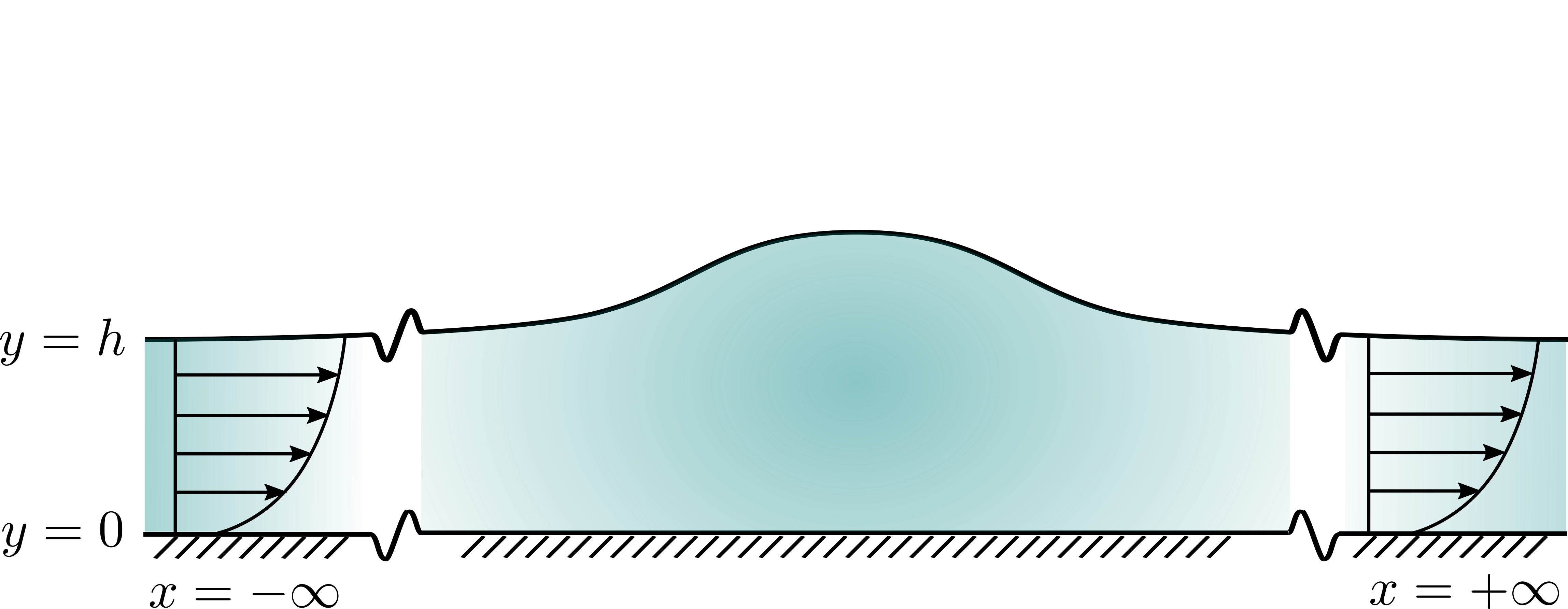}
	\caption{A sketch of the solitary wave profile on a unidirectional flow.}
	\label{fig:solitary}
\end{figure}	
 
Now we are in a position to formulate our main result about
the existence of solitary waves of elevation.
	
\begin{theorem} {\it For every sufficiently large $b > 0$ problem 
\eqref{lapp}--\eqref{bep}, \eqref{sw} has a solution $(\psi, \eta)$ with the
following properties:
	
{\rm (i)} $\eta(X) > h$ for all $X \in \RR$, that is, $\eta$ describes a
solitary wave of elevation;
	
{\rm (ii)} there are two stagnation points on the bottom and two streamlines
within the fluid domain connect these points (see Figure 1); the critical level
corresponds to the lower streamline, whereas the upper one of these streamlines
is critical;
	
{\rm (iii)} all streamlines below the critical one are closed and surround an
interior stagnation point on the vertical line through the crest;

{\rm (iv)} all streamlines above the critical one are diffeomorphic to the free
surface profile. }
\end{theorem}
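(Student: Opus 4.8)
The plan is to obtain the solitary wave as a long-wave limit of a family of Stokes wave-trains, following the strategy of Amick and Toland for the irrotational case, but with the crucial modification that the periodic waves here carry a cat's-eye vortex below each crest. First I would fix the large parameter $b>0$ and introduce the scaling of Section 2, which regularizes the singular dependence on the vorticity and converts problem \eqref{lapp}--\eqref{bep} into a form amenable to the centre-manifold reduction. I would then invoke Theorem 3, which via the finite-dimensional Hamiltonian reduction of Section 3 yields, for each sufficiently small $\varepsilon>0$, a periodic solution $(\psi_\varepsilon,\eta_\varepsilon)$ of wavelength $L(\varepsilon)\to\infty$ as $\varepsilon\to 0$, together with the associated Bernoulli constant $R(\varepsilon)$ and limiting depth parameter. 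The homoclinic orbit of the reduced system corresponds, after undoing the scaling, to a solution of \eqref{lapp}--\eqref{bep} satisfying the decay conditions \eqref{sw}; this gives the function $(\psi,\eta)$ and the constant $h$ appearing in the statement, establishing existence.

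Next I would verify the qualitative properties (i)--(iv) by analysing the streamline pattern of the constructed flow. Property (i), that $\eta(X)>h$, should follow from the sign of the leading-order profile in the scaled equations: the homoclinic solution of the model Hamiltonian system is a single hump (a wave of elevation), and the error terms controlled in the reduction are too small to change this sign, so $\eta-h$ is strictly positive away from infinity. For (ii)--(iv) I would study the horizontal velocity $\psi_Y$ and the structure of the level sets of $\psi$. Since the vorticity $b$ is a large positive constant, the laminar flow at infinity has a critical level $Y=Y_c$ where $\psi_Y$ vanishes; below it the flow reverses direction. Tracking this through the solitary-wave solution, one sees that on the bottom $Y=0$ the function $X\mapsto\psi_Y(X,0)$ changes sign exactly twice (because the perturbation is localized and decays at infinity), which produces the two bottom stagnation points. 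The critical level $\{\psi_Y=0\}$ and the streamline through the bottom stagnation points both emanate from these two points and enclose a bounded region; inside that region $\psi_Y<0$ while above it $\psi_Y>0$, which is what (ii) asserts. Property (iii) then follows because inside the enclosed region the streamlines cannot reach infinity and, by the structure of a simple critical point of $\psi$ together with the strict convexity/monotonicity inherited from the model problem, they must be closed curves nested around a single interior stagnation point, which by the up-down symmetry of the wave lies on the vertical line through the crest. Property (iv) is the complementary statement: above the critical streamline $\psi_Y>0$, so $Y\mapsto\psi(X,Y)$ is strictly increasing on each vertical section, hence each streamline $\{\psi=\text{const}\}$ in that region is a graph $Y=Y(X)$ and is therefore diffeomorphic to the free surface $Y=\eta(X)$.

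The main obstacle, I expect, will not be the abstract existence (which is delivered by Theorem 3) but the passage from the scaled reduced system back to the physical problem while keeping enough quantitative control to read off the streamline topology. Specifically, one must show that the cat's-eye vortex present in the approximating Stokes waves persists in the limit and, contrary to the small-amplitude regime of Wåhlén, does not shrink to a point as the wavelength tends to infinity; this requires estimates showing that the vertical extent of the recirculation region is bounded below uniformly in $\varepsilon$, which in turn depends on the large-$b$ scaling chosen in Section 2. A secondary difficulty is the careful count of zeros of $\psi_Y$ on the bottom and the verification that $\{\psi_Y=0\}$ is a single smooth arc joining the two bottom stagnation points rather than something more complicated; this is where one uses the near-explicit form of the leading-order solution together with an implicit-function-theorem perturbation argument to rule out spurious components of the critical set. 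Once these two points are settled, properties (i)--(iv) follow by the monotonicity and symmetry considerations sketched above.
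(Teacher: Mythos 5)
Your existence argument is, at bottom, the paper's: centre-manifold reduction to the planar reversible Hamiltonian system and a homoclinic orbit of that system. But note that the paper does \emph{not} pass to a long-wave limit of Stokes waves to get existence: Theorem~\ref{homsolthm} produces the homoclinic solution directly as the level curve ${\mathcal H}_1=L$ of the reduced Hamiltonian, with the explicit asymptotics \eqref{homsoleq}; the periodic-wave analysis of Section~3.2 is auxiliary and only serves to show that the cat's-eye does not shrink as $\ell\to L$. Your description of Theorem~3 as delivering periodic waves of diverging wavelength, followed by a compactness/limit passage in the physical variables, would require uniform estimates the paper never needs; the homoclinic orbit is available at once in the reduced system.

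The genuine gap is in your mechanism for (ii)--(iv). You assert that ``the laminar flow at infinity has a critical level $Y=Y_c$ where $\psi_Y$ vanishes; below it the flow reverses direction,'' and deduce the two bottom stagnation points from a localized perturbation of that state. This is the wrong state at infinity: by \eqref{sw} the constructed wave tends to the \emph{unidirectional} conjugate laminar flow of depth $h_-$ (this is precisely why the flow is unidirectional at infinity and why the critical level is a bounded arc joining the two bottom stagnation points, as in Figure~1). With your premise, $\psi_Y(X,0)$ would be negative at both infinities, your count of ``exactly two sign changes'' does not follow, and the recirculation region would be unbounded rather than a bottom-attached vortex under the crest. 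The paper's actual argument hinges on the phase-plane picture of \eqref{redappr}: the homoclinic orbit is asymptotic to the equilibrium $\alpha=\tau^2\alpha_+^\star$ (the unidirectional conjugate flow) but encircles the origin, which represents the counter-current flow \eqref{eqn:laminar} with $s<0$, so that $\alpha(x)$ attains negative values near the crest. Undoing the flattening and linearization then gives, near the bottom,
$\hat{\Phi}_y = u_y\bigl[\,1-\tfrac{c_0}{k}\alpha(x)+O(\epsilon)+O(y)\bigr]+\alpha(x)\bigl(c_0+O(y)\bigr)$ with $u_y<0$ there; hence the bottom horizontal velocity is positive at infinity and negative beneath the crest, which is what produces the two bottom stagnation points, the bounded critical level and critical streamline of (ii), and the nested closed streamlines of (iii). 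Without identifying the correct asymptotic state and the sign of $\alpha$ along the homoclinic orbit, your streamline analysis does not go through. (Minor: the symmetry used to place the interior stagnation point on the crest line is the reflection in $X$ about the crest, not an ``up-down'' symmetry.)
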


An equivalent formulation of this assertion and its proof are given in the next
two sections. Our approach is based on a carefully chosen scaling of the
original problem. Then we apply the spatial dynamics method to the scaled
problem in the same way as in \cite{Kozlov2019}. This allows us to reduce the
problem to a finite-dimensional Hamiltonian system; it has one degree of freedom
and admits a homoclinic orbit describing a solitary wave of elevation in the
original coordinates. The orbit goes around an equilibrium point representing a
shear flow of constant depth with a counter-current and this guarantees the
presence of a stagnation point and a critical streamline as is illustrated in
Figure 1 above.

\section{Reformulation of the problem}

To avoid difficulties arising from the fact that $b$ is large, it is convenient
to scale variables as follows:
\[
\bar{x} = \sqrt{b} X, \ \ \bar{y} = \sqrt{b} Y, \ \ \bareta (\bar{x}) = \sqrt{b}
\, \eta(X), \ \ \bar{\psi}(\bar{x},\bar{y}) = \psi(X,Y) .
\]
This transforms \eqref{lapp}--\eqref{bep} into
\begin{subequations}\label{eqn:scaled}
	\begin{alignat}{2}
	\bar{\psi}_{\bar{x}\bar{x}} + \bar{\psi}_{\bar{y}\bar{y}} - 1 & = 0 &\qquad& 
    \text{for } 0 < \bar{y} < \bareta(\bar{x}), \label{eqn:scaled:1} \\  
	\bar{\psi} (\bar{x},\bar{y}) & = 0 &\quad& \text{on } \bar{y}=0, \label{eqn:scaled:2}
    \\ \bar{\psi} (\bar{x},\bar{y}) & = 1 &\quad& \text{on } \bar{y}=\bareta,
	\label{eqn:scaled:3} \\ |\nabla \bar{\psi}|^2 + 2 \gamma \bar{y} & = \bar{R} 
    &\quad& \text{on }\bar{y}=\bareta , \label{eqn:scaled:4}
	\end{alignat}
\end{subequations}
where $\gamma = b^{-3/2}$ and $\bar{R} = R b^{-1}$. This problem describes
two-dimensional waves with vorticity (the latter is equal to one) and weak
gravity because $\gamma$ is a small parameter provided $b$ is large.

Let us consider the stream solution $\bar{\psi} = u(\bar{y};s)$ and $\bareta =
h(s)$ such that $u'(0)=s$. From \eqref{eqn:scaled:1}--\eqref{eqn:scaled:3} one
obtains the unique pair
\begin{equation}\label{eqn:laminar}
u(\bar{y};s) = \tfrac 12 \bar{y}^2 + s\bar{y}, \ \ h(s) = -s + \sqrt{2+s^2} \,,
\end{equation}
and \eqref{eqn:scaled:4} yields the corresponding Bernoulli constant:
\[
\bar{R}(s) = 2 \gamma h(s) + [u' (h(s);s)]^2 .
\]
If $s<0$, then the laminar flow defined by \eqref{eqn:laminar} has a
counter-current, whereas the corresponding flow is unidirectional when $s > 0$.
In what follows we assume that $s<0$.

\subsection{Flattening transformation}

Changing the coordinates $(\bar{x}, \bar{y})$ to
\[ 
  (x, y) = \left( \bar{x},  \frac{\bar{y}}{\bareta(\bar{x})}h(s) \right) ,
\]
we map the water domain onto the strip $\RR \times (0, h(s))$. Let
\[
\hat \Phi (x, y) = \bar{\psi} \left( \bar{x} , \bar{y} \right)
\]
be new unknown function, for which problem \eqref{eqn:scaled} with $\bar{R} =
\bar{R}(s)$ takes the form:
\begin{subequations}\label{eqn:sysflat}
	\begin{alignat}{2}
	\Big[ \hat{\Phi}_x - \frac{y \bareta_x} {\bar \eta} \hat{\Phi}_y \Big]_x
	 - \frac{y \bareta_x} {\bareta} 	\Big[ \hat{\Phi}_x - \frac{y \bareta_x} 
{\bar \eta} \hat{\Phi}_y \Big]_y  + \frac{h^2(s)}{\bar{ \eta}^2}\hat{\Phi}_{yy} - 1
 & = 0 &\qquad& \text{for } 0 < y < h(s), \label{eqn:sysflat:1} \\  
	\hat{\Phi} (x,0) & = 0 &\quad& \text{for }x\in\RR, \label{eqn:sysflat:2} \\ 
	\hat{\Phi} (x,h(s)) & = 1 &\quad& \text{for }x\in\RR,
	\label{eqn:sysflat:3} \\ 
	\hat{\Phi}_y^2 (x, h(s))-\frac{ \bareta^2 (\bar{R}(s) - 2 \gamma \bareta)} 
{h(s)^2(1 + \bareta_x^2)} & = 0 &\quad& \text{for }x\in\RR. \label{eqn:sysflat:4}
	\end{alignat}
\end{subequations}
Note that $\hat{\Phi} = u(y;s)$ and $\bareta = h(s)$ is a solution of this
system. Let us write \eqref{eqn:sysflat} as a first-order system, for which
purpose it is convenient to introduce the variable $\hat{\Psi}$ conjugate to
$\hat \Phi$ (cf. \cite{Kozlov2013}):
\begin{equation}\label{eqn:defhatpsi}
\hat{\Psi} (x, y) = \frac{\bareta}{h(s)} \Big[ \hat{\Phi}_x - \frac{y \bareta_x}
{\bareta} \hat{\Phi}_y \Big] .
\end{equation}
This allows us to write \eqref{eqn:sysflat} as follows:
\begin{subequations}\label{eqn:sysvect}
	\begin{alignat}{2}
& \hat{\Phi}_x  = \frac{h(s)}{\bareta} \hat{\Psi} + \frac{y}{\bareta} \bareta_x
 \hat{\Phi}_y &\quad&\text{for } (x, y) \in \RR \times (0, h(s)), \label{eqn:sysvect:1}
 \\
& \hat{\Psi}_x  =  \frac {\bareta_x}{\bareta} (y \hat{\Psi})_y -
\frac{h(s)}{\bareta} \hat{\Phi}_{yy} + \frac{\bareta}{h(s)}  &\quad&\text{for }
(x, y) \in \RR \times (0, h(s)), \label{eqn:sysvect:2}\\ & \hat{\Phi}(x,0)
= \hat{\Psi}(x,0) = 0 &\quad&\text{for } x\in \RR, \label{eqn:sysvect:3} \\ &
\hat{\Phi}(x,y)  =1 &\quad&\text{on } y=h(s),\label{eqn:sysvect:4}\\ &
\hat{\Phi}_y^2+\hat{\Psi}^2  =
\frac{\bareta^2}{h^2(s)}(\bar{R}(s)-2\gamma\bareta)&\quad&\text{on }
y=h(s).\label{eqn:sysvect:5}
	\end{alignat}
\end{subequations}
Furthermore, we have that
\begin{equation}\label{eqn:etaprime}
\bareta_x(x) = - \frac{\hat{\Psi}(x,h(s))}{\hat{\Phi}_y(x,h(s))}
\end{equation}

Relations \eqref{eqn:sysvect} can be considered as an infinite-dimensional
dynamical system for $\hat{\Phi}$ and $\hat{\Psi}$ only. Indeed, $\bareta$ and
$\bareta_x$ can be eliminated with the help of \eqref{eqn:sysvect:5} and
\eqref{eqn:etaprime}, which will be formalised in the next section.

\subsection{Linearization around a laminar flow}

Let us linearize relations \eqref{eqn:sysvect} around the stream solution
$\hat{\Phi} = u(y;s), \hat{\Psi} = 0, \bareta=h(s)$, for which purpose we
introduce
\[
\Phi = \hat{\Phi}-u- \frac{y u_y}{h(s)} \zeta  , \ \ \Psi = \hat{\Psi}, \ \
\zeta = \bareta - h(s).
\]	
Then we obtain from \eqref{eqn:sysvect}:
\begin{subequations}\label{eqn:finalsys}
	\begin{alignat}{2}
	& \Phi_x  = \Psi + N_1 &\quad&\text{for } (x, y) \in \RR \times (0, h(s)),
    \label{eqn:finalsys:1} \\
	& \Psi_x  =  - \Phi_{yy} + N_2  &\quad&\text{for } (x, y) \in \RR \times (0, h(s)),
    \label{eqn:finalsys:2}\\
	& \Phi(x,0)  =\Psi(x,0) = 0 &\quad&\text{for } x\in \RR, \label{eqn:finalsys:3} \\
	& \Phi_y - \kappa \Phi  = N_3 &\quad&\text{on } y=h(s). \label{eqn:finalsys:4}
	\end{alignat}
\end{subequations}
Here
\[
\kappa = \kappa(s,\gamma) = \frac{\gamma+k}{k^2} \ \mbox{and} \ k = k(s) =
h(s)+s = \sqrt{2+s^2} > 0 \, ,
\]
whereas the nonlinear operators in \eqref{eqn:finalsys:1},
\eqref{eqn:finalsys:2} and \eqref{eqn:finalsys:3} have the form
\[
\begin{split}
N_1 & = \frac{-h(s) \Psi \zeta + y \zeta_x(y\zeta + h(s)
\Phi_y)}{h(s)(h(s)+\zeta)},\\ N_2 & = \frac{\zeta^2 + h(s)\zeta_x(y\Psi)_y +
h(s)\zeta \Phi_{yy}}{h(s)(h(s)+\zeta)},\\ N_3 & = \frac{-\hh^2\Psi^2 +
(\zeta+\Phi_y)(-\hh\zeta(\hh-2k) +2\zeta^2 k-\hh^2\Phi_y)}{2(\hh+\zeta)^2 k} ,
\end{split}
\]
respectively. Moreover, we find that
\begin{equation}\label{zetazetaprime}
\zeta(x) = - \frac{\Phi(x,h(s))}{k}, \ \ \zeta_x(x) = - \frac{\hh \Psi(x,h(s))}{\hh 
\zeta + \hh k + \zeta k + \hh \Phi_y(x,h(s))}.
\end{equation}
Substituting these expressions into formulae for $N_1$, $N_2$ and $N_3$, we see
that \eqref{eqn:finalsys:1} and \eqref{eqn:finalsys:2} form an
infinite-dimensional \textit{reversible} dynamical system on the manifold
defined by \eqref{eqn:finalsys:3} and \eqref{eqn:finalsys:4}. The presence of
the nonlinear boundary condition \eqref{eqn:finalsys:4} is inessential because
it is reducible to a homogeneous one after a proper change of variables; see
\cite{GrovesWahlen08} and \cite{Kozlov2019} for details.

Let us assume the following regularity
\[
\Psi \in C(\RR; X_1), \ \ \Phi \in C(\RR;X_2) ,
\] 
where 
\[
X_m = \{ f \in H^m(0,1): \  f(0) = 0 \}, \ \ \ m=0,1,2,
\] 
and $H^m(0,1)$ denotes the Sobolev space. Furthermore, it is clear that $k$ and
$\kappa$ depend analytically on $s$ and $\gamma$, provided their values are
small, and so the same is true for the operators $N_1$, $N_2$ and $N_3$. More
precisely, let
\[ \Lambda_\epsilon = \{ \lambda = (s, \gamma) \in \RR^2: |s|^2 + |\gamma|^2 < 
\epsilon^2 \} 
\]
be a small neighbourhood of the origin in the parameter space, then 
\[ N_1 \in C^{\infty} (X_1
\times X_2 \times \Lambda_\epsilon; H^1(0,1)) , \quad N_2 \in C^{\infty} (X_1
\times X_2 \times \Lambda_\epsilon; L^2(0,1)) ,
\]
whereas $N_3 \in C^{\infty} (X_1 \times X_2 \times \Lambda_\epsilon; \RR)$.
Moreover, all derivatives of these operators are bounded and uniformly
continuous in $\Lambda_\epsilon$.

\subsection{A linear eigenvalue problem}

The centre subspace of \eqref{eqn:finalsys} is determined by the imaginary
spectrum of the linear operator $L (\Psi, \Phi)=(- \Phi_{YY}, \Psi)$ defined on
a subspace of $X_1 \times X_2$ and subject to the homogeneous condition
\[
\Phi_Y (h(s)) = \kappa \Phi (h(s)).
\]
It is straightforward to find that the spectrum is discrete and consists of all
$\hat{\tau} \in {\mathbb C}$ such that $\mu = \hat{\tau}^2$ is an eigenvalue of the
following Sturm--Liouville problem:
\begin{equation}
- \varphi_{YY} = \mu \varphi \ \mbox{on} \ (0, 1) ; \quad \varphi (0) = 0 \
\mbox{and} \ [ \varphi_Y - \kappa \varphi ]_{Y=1} = 0 \, . \label{SL}
\end{equation}
(Basic facts about Sturm--Liouville problems can be found in \cite{Teschl12}.)
Thus, the imaginary part of the spectrum of $L$ corresponds to the negative
eigenvalues of \eqref{SL}.

The spectrum of \eqref{SL} is discrete and consists of real simple eigenvalues
say 
\[
\mu_1 < \mu_2 <  \cdots < \mu_j < \cdots
\]
accumulating at infinity, and the corresponding eigenfunctions $\varphi_j$ can
be rescaled to form an orthonormal basis for $L^2(0,h(s))$.

\subsection{On the existence of a negative eigenvalue}
	
Let us investigate the spectral problem \eqref{SL} for small negative $s$ and
small positive $\gamma$. Solving \eqref{SL} explicitly, we find that the unique
negative eigenvalue $\mu_1 = - \tau^2$ satisfies the dispersion equation:
\[
\tau \hh \coth(\tau \hh) = \kappa \hh .
\]
Using the definition of $\kappa$, we obtain that
\[
\kappa \hh = 1 + \frac{\gamma-s}{\sqrt{2}} + O (\epsilon^2) \ \ \mbox{as} \
\epsilon \to 0 .
\]
Thus, $\kappa \hh > 1$ for all sufficiently small negative $s$ and small $\gamma
> 0$, and so the dispersion equation has a unique root $\tau> 0$ such that
\begin{equation}\label{tau}
[\tau \hh]^2 = 3\frac{\gamma-s}{\sqrt{2}} + O(\epsilon^2) \ \ \mbox{as} \
\epsilon \to 0 .
\end{equation}
Since $h(s)=\sqrt{2}+O(s)$, we have the following approximate expression for the
normalized eigenfunction $\phi_1$:
\begin{equation}\label{c0}
\phi_1(y) = c_0 \, y (1 + O(\epsilon)) \ \ \mbox{as} \ \epsilon \to 0 , \
\mbox{where} \ c_0 = \sqrt{\frac{3}{2^{3/2}}} \, .
\end{equation}
It is uniform with respect to $ y \in [0,\hh]$ and $\lambda = (s,\gamma) \in
\Lambda_\epsilon$.
	
Finally, it is worth mentioning that the positive part of the spectrum of $L$ is
separated from zero, because $\mu_2 > \pi^2/h^2(s)> \pi^2/2$.

\section{Reduction to a finite-dimensional system} 
	
Let us reduce \eqref{eqn:finalsys} to a finite-dimensional Hamiltonian system by
using the centre-manifold technique developed by \cite{Mielke88}, who
considered quasilinear elliptic problems in cylinders. For this purpose we apply
a result obtained in \cite{Kozlov2019} (namely, Theorem 3.1), which is a
convenient way to obtain a reduced problem. Prior to that we use the so-called
spectral splitting to decompose the system.

\subsection{Spectral decomposition and reduction}
	
Following the method proposed in \cite{Kozlov2019}, we seek $(\Phi, \Psi)$ in
the form
\[ \Phi (x, y) = \alpha (x) \, \varphi_1 (y) + \widetilde \Phi (x, y) , \quad
\Psi (x, y) = \beta (x) \, \varphi_1 (y) + \widetilde \Psi (x, y) \, ,
\label{Phi}
\]
where $\widetilde{\Phi}$ and $\widetilde{\Psi}$ are orthogonal to $\varphi_1$
in $L^2(0,\hh)$; that is,
\[ \alpha (x) = \int_0^{\hh} \Phi (x, y) \, \varphi_1 (y) \, \D y , \quad \beta (x) =
\int_0^{\hh} \Psi (x, y) \, \varphi_1 (y) \, \D y \, .
\]
For $\lambda \in \Lambda_\epsilon$ we define projectors ${\mathcal P}_{\lambda}
\phi = \alpha \varphi_1$ and $\widetilde{\mathcal P}_{\lambda} = {\rm id} -
{\mathcal P}$, which are well defined on $H^1 (0,\hh)$ and orthogonal in $L^2
(0,\hh)$. Multiplying \eqref{eqn:finalsys:1} and \eqref{eqn:finalsys:2} by
$\varphi_1$ and integrating over $(0,\hh)$, we obtain
\begin{align}
& \alpha_X = \beta + F_{1}(\Psi,\Phi; \lambda) , \label{red1} \\ & \beta_X =
-\tau^2 \alpha + F_{2}(\Psi,\Phi; \lambda) , \label{red2}
\end{align}
where
\[ 
\begin{split}
& F_{1}(\Psi,\Phi; \lambda) = \int_0^{\hh} N_1(\Psi,\Phi; \lambda) \varphi_1 \, \D y , \\ 
& F_{2}(\Psi,\Phi; \lambda) = \int_0^{\hh} N_2(\Psi,\Phi; \lambda) \varphi_1 \, \D y - 
N_3(\Psi,\Phi; \lambda) \, \varphi_1 (\hh) \, .
\end{split} these
\]
The system for $\widetilde{\Phi}$ and $\widetilde{\Psi}$ is as follows:
\begin{eqnarray}
&& \widetilde{\Phi}_x = \widetilde{\Psi} + \widetilde{\mathcal P}_{\lambda} (N_1),
\label{red3} \\ && \widetilde{\Psi}_x = - \widetilde{\Phi}_{yy} + \widetilde{\mathcal
P}_{\lambda} (N_2) + \varphi_1 (\hh) \, \varphi_1 N_3 , \label{red4}
\end{eqnarray} 
and these functions satisfy the following boundary conditions:
\begin{equation}
\widetilde{\Phi} (x,0) = \widetilde{\Psi}(x,0) = 0 , \quad \widetilde{\Phi}_Y(x,\hh) -
\kappa \, \widetilde{\Phi} (x,\hh) = N_3 \, . \label{red5}
\end{equation}
Moreover, let $\widetilde{X}_{j}^{(\lambda)}$ denote $\widetilde{\mathcal
P}_{\lambda}(X_j)$, $j=0,1,2$, where $\lambda\in \Lambda_\epsilon$, then for all
$x \in \RR$ and $\lambda\in \Lambda_\epsilon$ we have that $\widetilde{\Psi}
\in \widetilde{X}_{1}^{(\lambda)}$ and $\widetilde{\Phi} \in
\widetilde{X}_{2}^{(\lambda)}$.

Applying Theorem 3.1 proved in \cite{Kozlov2019} to the decomposed system
\eqref{red1}--\eqref{red5}, we arrive at the following assertion.
	
\begin{theorem}\label{Tred} For any $m \geq 2$ there exist $\epsilon > 0$,
neighbourhoods $W \subset \RR^2,$ $W_1 \subset X_1, W_2 \subset X_2$ and the
vector-functions $r_j : W \times \Lambda_\epsilon \to W_j,$ $j=1,2,$ of the
class $C^m (W \times \Lambda_\epsilon)$ with the following properties.
\begin{itemize}
	\item[(I)] The derivatives of $r_1$ and $r_2$ are bounded and uniformly continuous,
    and the estimate
	\[
	\|r_1;H^1\|+\|r_2;H^2\|=O(|\alpha|^2+|\beta|^2), \ \ where \ (\alpha,\beta) \in W,
	\]
	holds uniformly with respect to $\lambda \in \Lambda_\epsilon$.
	\item[(II)] $r_j(\alpha,\beta,\lambda) \in \widetilde{X}_{j}^{(\lambda)},$ $j=1,2,$
    for all $\lambda \in \Lambda_\epsilon$ and all $(\alpha,\beta) \in W$.
	\item[(III)] The set
		\[
M^\lambda=\{ (\Psi^r[\alpha,\beta; \lambda],\Phi^r[\alpha,\beta;
\lambda])\,:\,(\alpha,\beta)\in W, \ \lambda \in \Lambda_\epsilon \} \subset X_1
\times X_2 ,
\]	
where $\Psi^r[\alpha,\beta; \lambda]=\beta\varphi_1+r_1(\alpha,\beta;\lambda)$
and $\Phi^r[\alpha,\beta; \lambda]=\alpha\varphi_1+r_2(\alpha,\beta;\lambda),$
is a locally invariant manifold for \eqref{eqn:finalsys}; that is, a unique
solution of \eqref{eqn:finalsys} goes through every point of $M^\lambda$ and
this solution belongs to $M^\lambda$ as long as $(r_1,r_2) \in W_1\times W_2$.
     \item[(IV)]
Every global solution $(\alpha,\beta) \in C(\RR;W)$ of the reduced system
\begin{equation} \label{eqn:reducedsystem}
\begin{split}
& \alpha_x = \beta + F_{1}(\Psi^r[\alpha,\beta; \lambda],\Phi^r[\alpha,\beta;
\lambda]; \lambda),   \\ & \beta_x = \mu_1 \alpha +  F_{2}(\Psi^r[\alpha,\beta;
\lambda],\Phi^r[\alpha,\beta; \lambda]; \lambda) ,
\end{split}
\end{equation}
where $\lambda \in \Lambda_\epsilon,$ generates the solution $(\Psi,\Phi)$ of
\eqref{eqn:finalsys} with
\[
\Psi(x,y) = \Psi^r[\alpha(x),\beta(x); \lambda](y), \ \ \Phi(x,y) =
\Phi^r[\alpha(x),\beta(x); \lambda](y).
\]
Moreover,  the reduced system \eqref{eqn:reducedsystem} is reversible.
\end{itemize}		
\end{theorem}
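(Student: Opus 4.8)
The statement is essentially an application of the abstract parameter-dependent centre-manifold reduction of \cite{Kozlov2019} (Theorem 3.1) to the decomposed system \eqref{red1}--\eqref{red5}, so the plan is to verify the hypotheses of that theorem and then transcribe its conclusions. Theorem 3.1 of \cite{Kozlov2019} concerns a reversible first-order evolution system in an infinite cylinder whose linear part has a finite-dimensional central subspace, the remainder of the spectrum being bounded away from the imaginary axis uniformly in the parameter, and whose nonlinearities are $C^\infty$ with bounded, uniformly continuous derivatives that vanish to second order at the reference equilibrium; it yields precisely a $C^m$ locally invariant centre manifold, written as a graph over the central coordinates, together with the induced finite-dimensional flow and the inheritance of reversibility. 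Thus the work reduces to three checks: the spectral picture of the linear operator, the regularity and quadratic vanishing of the nonlinear terms, and the reversibility of the decomposed system, after which items (I)--(IV) are read off.

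First I would record the spectral picture of $L(\Psi,\Phi)=(-\Phi_{yy},\Psi)$ under $\Phi(0)=\Psi(0)=0$, $\Phi_y(\hh)=\kappa\Phi(\hh)$, using Subsections 2.3--2.5. For $\lambda=(s,\gamma)$ in a sufficiently small $\Lambda_\epsilon$ with $s<0$ and $\gamma>0$, the Sturm--Liouville problem \eqref{SL} has exactly one negative eigenvalue $\mu_1=-\tau^2$ with $\tau>0$ satisfying \eqref{tau}, while $\mu_2>\pi^2/2$; hence $L$ has the two simple purely imaginary eigenvalues $\pm\ii\tau$ and all its remaining spectrum is real and at distance at least a fixed positive constant from zero, uniformly in $\lambda\in\Lambda_\epsilon$. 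The central subspace is therefore two-dimensional, spanned by $(\varphi_1,0)$ and $(0,\varphi_1)$, which is exactly what the coordinates $(\alpha,\beta)$ parametrise; since $\varphi_1$ depends smoothly on $\lambda$ (cf. \eqref{c0}), the projectors $\mathcal P_\lambda$, $\widetilde{\mathcal P}_\lambda$ realising the spectral splitting depend smoothly on $\lambda$, as required by the theorem.

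Next I would check the nonlinear hypotheses. The $C^\infty$ regularity of $N_1,N_2,N_3$ on $X_1\times X_2\times\Lambda_\epsilon$, with values in $H^1(0,1)$, $L^2(0,1)$, $\RR$ respectively and with bounded, uniformly continuous derivatives, is recorded in Subsection 2.2; combining this with \eqref{zetazetaprime}, so that $\zeta$ and $\zeta_x$ are expressed through $(\Phi,\Psi)$, shows that every term is at least quadratic near the origin, which after projection gives the estimate $\|r_1;H^1\|+\|r_2;H^2\|=O(|\alpha|^2+|\beta|^2)$ of item (I). For reversibility I would take the reverser $\mathcal R:(\Phi,\Psi)(x,\cdot)\mapsto(\Phi(-x,\cdot),-\Psi(-x,\cdot))$: under $\mathcal R$ the quantities $\zeta$ and $\Phi_y$ are even in $x$ while $\zeta_x$ and $\Psi$ are odd, so $N_1$ is odd and $N_2,N_3$ are even, and hence \eqref{eqn:finalsys:1}--\eqref{eqn:finalsys:2} is $\mathcal R$-reversible. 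The one genuinely technical point is that the boundary relation \eqref{eqn:finalsys:4}, equivalently \eqref{red5}, is inhomogeneous and nonlinear, whereas Theorem 3.1 is stated for a system whose linear part carries the boundary condition; following \cite{GrovesWahlen08} and \cite{Kozlov2019} I would first perform the change of unknowns that homogenises \eqref{eqn:finalsys:4}, which only alters $N_1,N_2,N_3$ by smooth terms of the same class and preserves the reversibility. This reduction is where I expect most of the bookkeeping to lie.

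With the hypotheses in place, Theorem 3.1 of \cite{Kozlov2019} produces $r_1,r_2\in C^m(W\times\Lambda_\epsilon)$ and the manifold $M^\lambda$, which gives items (I)--(III) directly, the last being local invariance of $M^\lambda$ for \eqref{eqn:finalsys}. For item (IV) I would substitute $\Psi=\Psi^r[\alpha,\beta;\lambda]$ and $\Phi=\Phi^r[\alpha,\beta;\lambda]$ into the projected equations \eqref{red1}--\eqref{red2}: since $\mu_1=-\tau^2$ these become exactly \eqref{eqn:reducedsystem}, whose right-hand side is $C^m$ by item (I). Finally, reversibility of \eqref{eqn:reducedsystem} descends from that of the full system: $\mathcal R$ maps $M^\lambda$ onto itself and acts on the central coordinates by $(\alpha,\beta)\mapsto(\alpha,-\beta)$, so, $\varphi_1$ being real, the oddness of $N_1$ and evenness of $N_2,N_3$ translate into $F_1$ being odd and $F_2$ even under this sign change, which is precisely the reversibility of \eqref{eqn:reducedsystem}.
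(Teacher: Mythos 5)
Your proposal is correct and follows essentially the same route as the paper, which proves Theorem 2 simply by applying Theorem 3.1 of \cite{Kozlov2019} to the decomposed system \eqref{red1}--\eqref{red5}, the hypotheses (regularity and quadratic vanishing of $N_1,N_2,N_3$, the spectral splitting with $\mu_1=-\tau^2$ and $\mu_2>\pi^2/2$, reversibility, and the homogenisation of the nonlinear boundary condition via \cite{GrovesWahlen08,Kozlov2019}) having already been recorded in Section 2. Your write-up merely makes explicit the verification that the paper leaves implicit.
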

	
A direct calculation shows that the reduced system \eqref{eqn:reducedsystem} has
the following structure:
	\begin{equation} \label{redappr}
	\begin{split}
	&\alpha_x = \beta [ 1 + O(|\alpha|+|\beta|^2) ], \\
	&\beta_x = -\tau^2 \alpha + A \alpha^2 + O(|\alpha|^3+|\beta|^2),
	\end{split}
	\end{equation}
where $A = \tfrac 12 c_0^3+O(\epsilon)$	and $c_0$ is defined in \eqref{c0}. Now
we are in a position to formulate and prove the following.

\begin{theorem} \label{homsolthm} Problem \eqref{redappr} has a homoclinic
solution such that
\begin{equation} \label{homsoleq}
\alpha_{h}(x) = \tau^2\alpha^\star_+-\left(\tfrac{3}{2}\alpha_+^\star \tau^2 +
O(\tau^4) \right) \sech^2(\tau x/2) \ \ as \ \tau \to 0 ,
\end{equation}
whereas $\beta_h$ is defined implicitly by the first formula \eqref{redappr}.
Here $\alpha^\star_+ = 1/A + O(\tau^2)$ is a constant independent of $x$.
\end{theorem}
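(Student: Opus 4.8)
\noindent\textit{Proof strategy.} The plan is to rescale \eqref{redappr} so that $\tau$ drops out of the leading order, to solve the limiting planar system explicitly, and then to establish persistence of a symmetric homoclinic orbit under the resulting regular perturbation. First I would introduce $\alpha(x)=\tau^2 a(\xi)$, $\beta(x)=\tau^3 b(\xi)$ with $\xi=\tau x$; substituting into \eqref{redappr} and dividing by the common power of $\tau$ gives
\[
a_\xi = b\,[1+O(\tau^2)],\qquad b_\xi = -a + A a^2 + O(\tau^2),
\]
where, for $(a,b)$ in a fixed bounded set, the remainders are $C^m$ ($m\ge2$) with all derivatives $O(\tau^2)$ uniformly in the parameters, and the reversibility from Theorem \ref{Tred}(IV) survives the scaling, the reversor becoming $(a,b)\mapsto(a,-b)$, $\xi\mapsto-\xi$.

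Next I would analyse the truncated system obtained at $\tau=0$, namely $a_\xi=b$, $b_\xi=-a+Aa^2$, equivalently $a_{\xi\xi}=-a+Aa^2$, which is Hamiltonian with first integral $\tfrac12 b^2+\tfrac12 a^2-\tfrac A3 a^3$. Its equilibria are the center $(0,0)$ and the hyperbolic saddle $(1/A,0)$, and the energy level through the saddle carries the explicit homoclinic loop
\[
a_0(\xi) = \frac1A - \frac{3}{2A}\,\sech^2(\xi/2),\qquad b_0=(a_0)_\xi ,
\]
which encircles the center. In particular $a_0(0)=-1/(2A)$, so the loop meets the symmetry line $\{b=0\}$ at a point distinct from the saddle, and does so transversally since there $b_\xi=-a_0(0)+Aa_0(0)^2=3/(4A)\neq0$.

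The persistence step comes next. For small $\tau$ the saddle perturbs to a nearby hyperbolic equilibrium $(a_*(\tau),0)$ with $a_*(\tau)=1/A+O(\tau^2)$, by the implicit function theorem applied to the $b$-equation on the invariant line $\{b=0\}$; its one-dimensional unstable manifold $W^u_\tau$ then depends on $\tau$ in a $C^1$ way, uniformly on compact sets, because hyperbolicity is uniform as $\tau\to0$ (the eigenvalues being $\pm1+O(\tau^2)$). Since $W^u_0$ crosses $\operatorname{Fix}(\rho)=\{b=0\}$ transversally near $(-1/(2A),0)$, and a transverse intersection of a curve with a line persists under $C^1$-small perturbations of the curve, for small $\tau$ the manifold $W^u_\tau$ meets $\{b=0\}$ transversally at a point $O(\tau^2)$-close to $(-1/(2A),0)$. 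Because $\rho$ reverses time and fixes the equilibrium, the orbit through that point lies in $W^s_\tau$ as well, hence is a symmetric homoclinic orbit to $(a_*(\tau),0)$; continuous dependence together with exponential decay toward the hyperbolic equilibrium gives $a_\tau(\xi)=a_0(\xi)+O(\tau^2)$ uniformly in $\xi$. Undoing the scaling, $\alpha_h(x)=\tau^2 a_\tau(\tau x)=\tau^2 a_0(\tau x)+O(\tau^4)$, which is precisely \eqref{homsoleq} with $\alpha^\star_+=a_*(\tau)=1/A+O(\tau^2)$, and $\beta_h$ is recovered from the first line of \eqref{redappr} by the implicit function theorem.

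The hard part is the uniformity as $\tau\to0$: in the unscaled variables the homoclinic orbit collapses onto the origin and flattens, so the limit is singular and a Melnikov-type or direct implicit-function argument applied to \eqref{redappr} would not yield $\tau$-uniform estimates --- the scaling above is exactly what turns the problem into a regular $O(\tau^2)$-perturbation of a fixed planar system. A secondary point requiring care is that in the plane a homoclinic to a saddle is non-generic without extra structure; here that structure is the reversibility from Theorem \ref{Tred}(IV), so the persistence step relies essentially on the symmetry argument through $\operatorname{Fix}(\rho)$ rather than on any conserved quantity of the full system. The remaining verifications parallel those in \cite{Kozlov2019}.
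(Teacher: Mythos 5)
Your proposal is correct in substance, but it proves the existence part by a genuinely different mechanism than the paper. After the same scaling (your $(a,b,\xi)$ is the paper's $(\alpha_1,\beta_1,x_1)$, giving \eqref{redappr1}), the paper does not argue by persistence at all: it uses the flow-force invariant, which descends to an exact conserved quantity ${\mathcal H}_1$ of the reduced system, identifies the homoclinic orbit as the level curve ${\mathcal H}_1=L$ through the perturbed saddle, and then obtains \eqref{homsoleq} by explicit quadrature, approximating $L-{\mathcal H}_1(\alpha_1,0)$ by the cubic $a(\tau)(\alpha_1-\alpha_-^\star)(\alpha_1-\alpha_+^\star)^2$ and integrating to get the $\sech^2$ profile. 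You instead solve the $\tau=0$ system explicitly and invoke reversibility: transversal intersection of $W^u_\tau$ with $\mathrm{Fix}(\rho)=\{b=0\}$ near $(-1/(2A),0)$ persists, and the symmetric orbit through that point is homoclinic; closeness then comes from smooth dependence plus uniform hyperbolicity. Both routes are legitimate, and your diagnosis that reversibility (not genericity or a Melnikov computation on the unscaled system) is the essential structure is sound; note the paper sidesteps this issue entirely because an exact first integral makes the planar homoclinic automatic. What the paper's approach buys in addition is the whole phase portrait at once --- the family of periodic level curves $\ell\in(0,L)$ that Section 3.2 uses for the Stokes-wave approximation and the cat's-eye width, and the explicit quadrature formula \eqref{solasymp} --- while your approach is more robust in that it needs only the reversibility stated in Theorem \ref{Tred}(IV). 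Your argument also delivers the fact needed later in the proof of Theorem 1, namely that $\alpha$ becomes negative, since $a_\tau(0)=-1/(2A)+O(\tau^2)<0$. Two small points to tighten: $\{b=0\}$ is not an invariant line but the fixed-point set of the reversor (equilibria lie on it only because the first equation of \eqref{redappr1} carries the factor $\beta_1$), and your claims that the scaled remainders are $O(\tau^2)$ in $C^1$ on bounded sets and that the $O(\tau^2)$ closeness holds uniformly on all of $\RR$ (matching decay rates near the saddle) deserve a sentence of justification via Taylor bounds for the reduced nonlinearities and uniform hyperbolicity; these are routine and at the same level of rigor as the paper's own assertions.
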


\begin{proof}	
It is known (see \cite{GrovesStylianou14},\cite{Kozlov2013}) that problem
\eqref{lapp}--\eqref{bep} has a Hamiltonian structure (even for arbitrary
vorticity) with the role of time played by the horizontal coordinate. The
Hamiltonian is the flow force invariant which has the following form in the
original coordinates $(X, Y)$:
	\[
	{\mathcal S} = \left[ \tfrac 12 R - b
	\right] \eta (X)  - \tfrac 12 \Big\{ \eta^2 (X) - \int_0^{\eta (X)}
	\tfrac 12 (\psi_Y^2 - \psi_X^2) + b \psi \, \D Y \Big\}.
	\]
Thus, the reduced system \eqref{redappr} has a constant of motion ${\mathcal
H}(\alpha,\beta)$, to obtain which one has to subject $b^{-1}{\mathcal S}$ to
all changes of variables described above. A direct calculation yields that
\[ \mathcal H (\alpha, \beta) = \frac{1}{2} (\beta^2 +\tau^2 \alpha^2) - \frac{A}{3}
\alpha^3 + O \left( \alpha^4 + \beta^2 \right) (= b^{-1}S) ,
\]
where $A$ is the coefficient in the second equation \eqref{redappr}. It should
be noted that $\mathcal H (\alpha, \beta)$ is an even function of $\beta$ which
follows from reversibility of this system. The form of $\mathcal H (\alpha,
\beta)$ suggests that variables must be scaled as follows:
\[
\alpha(x) = \tau^2 \alpha_1(x_1), \ \ \beta(x) = \tau^3 \beta_1(x_1), \ \
x=\tau^{-1} x_1 , \ \ {\mathcal H}(\alpha,\beta) = \tau^6{\mathcal
H}_1(\alpha_1,\beta_1),
\]
where
\[
{\mathcal H}_1(\alpha_1,\beta_1) = \frac{1}{2}(\alpha_1^2+\beta_1^2)-\frac{A}{3}
\alpha_1^3 + \tau^2O(|\alpha^3_1|+|\beta^4_1|) ,
\]
and so the scaled equations are
\begin{equation} \label{redappr1}
\begin{split}
&[\alpha_1]_{x_1} = \beta_1+ \tau^2\beta_1 O(|\alpha_1|+|\beta_1|^2), \\
&[\beta_1]_{x_1} = -\alpha_1 + A \alpha_1^2 +  \tau^2O(|\alpha_1|^3+|\beta_1|^2).
\end{split}
\end{equation}

\begin{figure}[t!]
	\centering
	\begin{subfigure}[t]{0.5\textwidth}
	\centering
\includegraphics[scale=0.8]{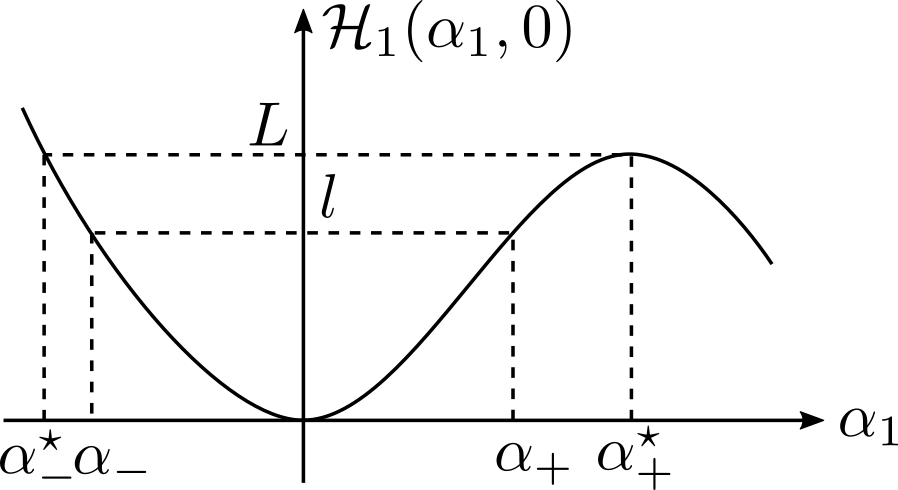}
\caption{Graph of the function ${\mathcal
H}_1(\alpha_1,0)$}
	\end{subfigure}%
	~ 
	\begin{subfigure}[t]{0.5\textwidth}
	\centering
\includegraphics[scale=0.8]{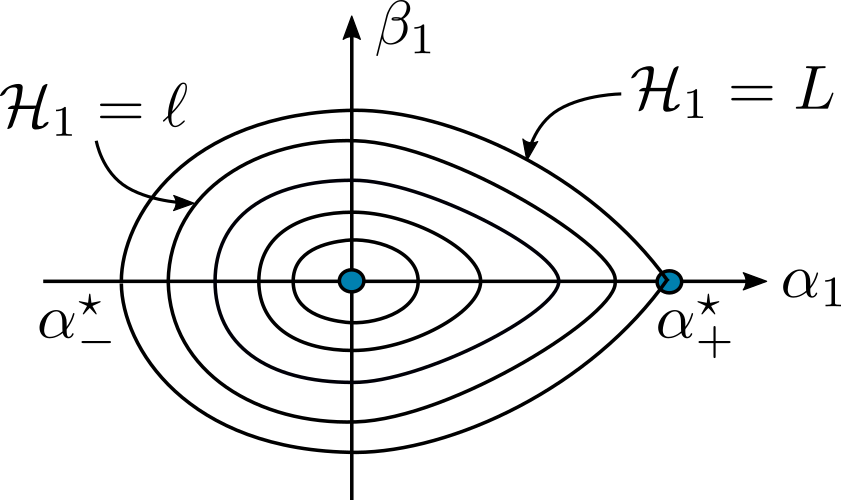} \caption{Phase portrait}
	\end{subfigure}
	\caption{The behaviour of the Hamiltonian near the origin (A) and its level curves (B)}
\end{figure}

The graph of ${\mathcal H}_1(\alpha,0)$ in a neighbourhood of the origin is
sketched in Figure 3A. It is clear that the local maximum of this function close
to the origin is attained at $\alpha=\alpha^\star_+$ and $L = {\mathcal
H}_1(\alpha^\star_+,0)$ is its value. Then the level line
\[
{\mathcal H}_1(\alpha_1,\beta_1) = \ell
\]
is a closed curve for every $\ell \in (0, L)$ and it corresponds to a periodic
solution. The contour
\[
{\mathcal H}_1(\alpha_1,\beta_1) = L
\]
defines the homoclinic orbit; see a sketch of level lines in Figure 3B. The
values $\alpha_-^\star$ and $\alpha_+^\star$ correspond to the ``crest'' and the
limiting depth respectively. An essential feature of the homoclinic orbit is
that $\alpha_1$ attains negative values on its left part; this implies that
there is a stagnation point as will be shown below.

Let us turn to proving \eqref{homsoleq}. By $\alpha_1^h$ we denote a homoclinic
solution to \eqref{redappr1}. It is easy to see that it is monotone on each side
of the crest; the latter corresponds to the value $\alpha_1^h(0) =
\alpha_-^\star$. Then we have
\[
X_1 = \int_0^{x_1} \textrm{d}x_1 = \int_{\alpha_-^\star}^{\alpha_1^h(x_1)}
\frac{\D \alpha_1}{[\alpha_1^h]_{x_1}}  ,
\]
where $[\alpha_1^h]_{x_1} = \beta_1^h [ 1+O(\tau^2) ]$; see the first equation
\eqref{redappr1}. On the other hand,
\[
L - {\mathcal H}_1(\alpha_1^h,0) = \frac{[\beta_1^h]^2}{2} \big\{ 1+\tau^2
O([\beta_1^h]^2) \big\} ,
\]
where the system's reversibility is used. Expressing $[\beta_1^h]^2$ from the
last formula and taking into account the fact it is positive on the interval of
the integration, we obtain
\begin{equation}\label{eqsol1}
x_1 = \int_{\alpha_-^\star}^{\alpha_1^h(x_1)} \frac{\D \alpha_1}
{[\alpha_1^h]_{x_1}} = [1+O(\tau^2)] \int_{\alpha_-^\star}^{\alpha_1^h(x_1)}
\frac{\D \alpha_1}{\sqrt{2 [L-{\mathcal H}_1(\alpha_1^h,0)]}} .
\end{equation}
Let us find an approximation of the integral using a third degree polynomial
for the expression under the square root; more precisely, let us show that
\begin{equation}\label{HamStruc}
\begin{split}
L-{\mathcal H}_1(\alpha_1,0)  = a(\tau) (\alpha_1-\alpha_-^\star)
(\alpha_1-\alpha_+^\star)^2 + \tau^2 O (|\alpha_1-\alpha_-^\star|
|\alpha_1-\alpha_+^\star|^2),
\end{split}
\end{equation}
where
\[
a(\tau) = - \frac{L}{\alpha_-^\star [\alpha_+^\star]^2} = \frac{A}{3}+O(\tau^2).
\]
It should be emphasised that \eqref{HamStruc} is used as a representation of
$L-{\mathcal H}_1(\alpha_1,0)$ only on the interval $[\alpha_-^\star,
\alpha_+^\star]$. To prove \eqref{HamStruc} we note that
\[
L-{\mathcal H}_1(\alpha_1,0) = L-\frac{\alpha_1^2}{2}+ A \frac{\alpha_1^3}{3} +
\tau^2 O(|\alpha_1|^3)=:Q_1(\alpha_1) + \tau^2 O(|\alpha_1|^3).
\]
A direct calculation yields that the estimate $O(\tau^2)$ holds for
\[
Q_1(\alpha_-^\star), \  Q_1(\alpha_+^\star), \ \partial_{\alpha_1}Q_1(\alpha_+^\star)
\]
Therefore, solving a linear system, one obtains that up to $O(\tau^2)$ the
coefficients of
\[
a(\tau)(\alpha_1-\alpha_-^\star)(\alpha_1-\alpha_+^\star)^2
\]
are the same as those of $Q_1$. This shows that the error in \eqref{HamStruc}
has the same estimate $O(\tau^2)$. It remains to use the fact that $L-{\mathcal
H}_1(\alpha_1,0)$ has a simple zero at $\alpha_-^\star$ and a double zero at
$\alpha_+^\star$ which proves \eqref{HamStruc}.

Now we have
\[
\begin{split}
\int_{\alpha_-^\star}^{\alpha_1^h(x_1)} \frac{\D \alpha_1}{\sqrt{2(L-{\mathcal
H}_1(\alpha_1,0))}} \! = & [1+O(\tau^2)] \int_{\alpha_-^\star}^{\alpha_1^h(x_1)}
\frac{\D \alpha_1}{\sqrt{2a(\alpha_1-
\alpha_-^\star)(\alpha_1-\alpha_+^\star)^2}} \\ = &
\frac{1+O(\tau^2)}{a_0\sqrt{2a}} \left[ \ln\left|
\frac{a_0+\sqrt{\alpha_1-\alpha_-^\star}}{a_0-\sqrt{\alpha_1-\alpha_-^\star}}
\right| \right]_{\alpha_1=\alpha_-^\star}^{\alpha_1=\alpha_1^h(x_1)} \\ = &
\frac{1+O(\tau^2)}{a_0\sqrt{2a}} \left( \ln\left[
\frac{\left(a_0+\sqrt{\alpha_1^h(x_1)-\alpha_-^\star}\right)^2}{\alpha_+^\star
-\alpha_1^h(x_1) }\right] \right),
\end{split}
\]
where $a_0 = \sqrt{\alpha_+^\star-\alpha_-^\star}$. Comparing this and
\eqref{eqsol1}, one obtains the following formula for the solitary wave solution:
\begin{equation} \label{solasymp}
\alpha_1^h(x_1) = \alpha_+^\star-\left[\frac{6}{A} +
O(\tau^2)\right]e^{-(1+O(\tau^2))x_1} ,
\end{equation}
where formulas $\alpha_-^\star = - \alpha_+^\star/2 + O(\tau^2)$ and
$\alpha_+^\star = 1/A + O(\tau^2)$ are taken into account. Furthermore, it is
straightforward to show that
\[
\|\alpha_1^h - \alpha_1^{h,\star}\|_{L^{\infty}(\RR)} = O(\tau^2),
\]
where 
\[
\alpha_1^{h,\star}(x_1) = \frac{1}{A} - \frac{3}{2A} \sech^2(x_1/2) .
\]
The latter is a homoclinic solution of \eqref{redappr1} with $\tau=0$, in which
case $\beta_1 = [\alpha_1^{h,\star}]_{x_1}$. Combining this and the asymptotic
formula \eqref{solasymp}, one arrives at \eqref{homsoleq} by rescaling variables
to the original ones.
\end{proof}

\begin{figure}[t!]
	\centering
	\includegraphics[scale=0.6]{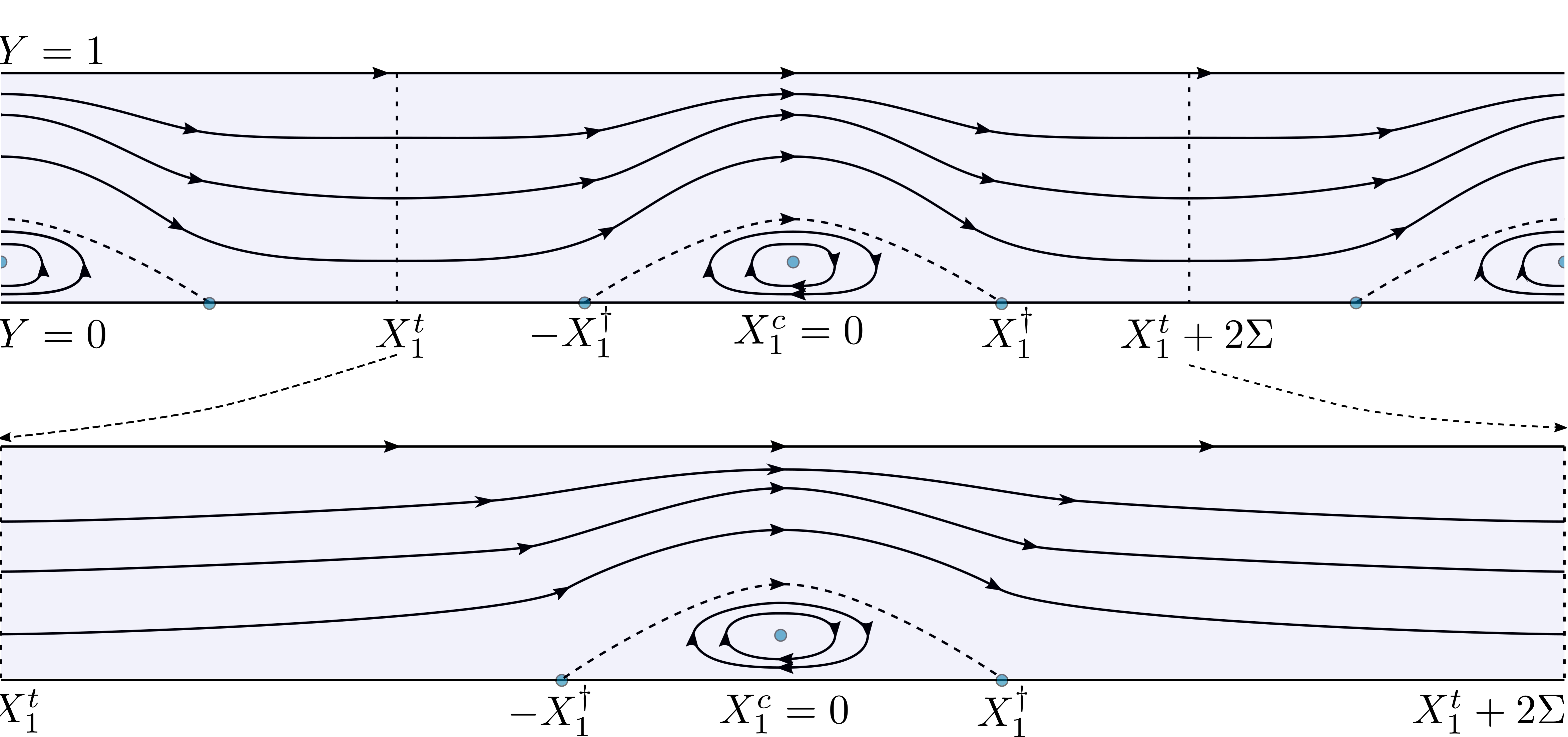}
	\caption{Two Stokes waves corresponding to different values of $\ell$ plotted
    in $(X_1,Y)$-variables.}
	\label{fig:stokes}
\end{figure}

\subsection{Periodic waves}

An approximation of solutions for periodic waves can be found in the same way as
for solitary waves. Indeed, let us consider the periodic solution
$\alpha_1^\ell$ corresponding to some energy ${\mathcal H}_1=\ell$. Let the
trough of the corresponding wave be located at $x_1^c = 0$ and let the nearest
crest to the left have the coordinate $x_1^t < x_1^c$ (see Figure
\ref{fig:stokes}). Then for every $x_1 \in (x_1^t x_1^c]$ we have
\begin{equation} \label{eq:periodic}
x_1 - x_1^t = \int_{\alpha_-}^{\alpha_1^\ell(x_1)} \!\! \frac{\D
\alpha_1}{[\alpha_1^\ell]_{x_1}} = [1+O(\tau^2)]
\int_{\alpha_-}^{\alpha_1^\ell(x_1)} \frac{\D \alpha_1}{\sqrt{2 [\ell-{\mathcal
H}_1(\alpha_1,0)]}} \, .
\end{equation}
In particular, the half-period of this solution is equal to
\begin{equation}
\Sigma_\ell = x_1^c-x_1^t = [1+O(\tau^2)] \int_{\alpha_-}^{\alpha_+} \!\!
\frac{\D \alpha_1}{\sqrt{2 [\ell - {\mathcal H}_1 (\alpha_1, 0) ]}} \, ,
\label{Lambda}
\end{equation}
and so $\Sigma_\ell \to +\infty$ as $\ell \to L$. Let us estimate the bottom
width of a cat's-eye vortex (in Figure \ref{fig:stokes}, it is bounded above by
the dashed streamline). On every interval symmetric about the crest and having
the length $2 \Sigma_\ell$, there are exactly two stagnation points on the
bottom that bound the bottom-attached vortex (these points are shown solid in
Figure \ref{fig:stokes}), and $\alpha_1^\ell(\pm x_1^\dagger) = 0$ at these
points nearest to the origin. From \eqref{eq:periodic} the approximate formula
follows:
\[
-x_1^\dagger-x_1^t = [1+O(\tau^2)] \int_{\alpha_-}^{0} \!\! \frac{\D
\alpha_1}{\sqrt{2 [\ell - {\mathcal H}_1 (\alpha_1, 0) ]}} \, ,
\]
which yields that $x_1^\dagger = O(1)$ as $\ell \to L$ in view that the
integral
\[
\int_{\alpha_-}^{0} \!\! \frac{\D \alpha_1}{\sqrt{2 [L - {\mathcal H}_1
(\alpha_1, 0) ]}}
\]
is finite. Indeed, the function $L - {\mathcal H}_1 (\alpha_1, 0)$ has only one
simple zero $\alpha_1=\alpha_-^\star$ on the interval of integration. Thus,
$x_1^\dagger$ remains bounded when the wavelength goes to infinity, and the same
is true for the domain occupied by cat's-eye vortex. Therefore, the structure of
streamlines is similar for the limiting solitary wave as shown in Figure
\ref{fig:solitary}.

\section{Proof of Theorem 1}

It is straightforward to recover the free surface profile which takes the form
\begin{equation}\label{solprofile}
\eta(X) = h_- + \sqrt{b} \left[ \frac{3}{c_0^2} + O(\tau^2) \right] \tau^2
\textrm{sech}^2(\sqrt{b} \tau X/2)
\end{equation}
in the original coordinates. Here $c_0^2 = 2 A k / \varphi_1 (h (s))$ and $h_-$
is the depth of the unidirectional laminar flow conjugate to $b^{-1/2}h(s)$.
Thus, $\eta(X)$ descibes a solitary wave of positive elevation.

To show that the flow, on which this wave propagates, has a bottom-attached
cat's-eye vortex, let us track back the changes of coordinates made above and
find that
\[
\hat{\Phi}_y = u_y + \Phi_y - (u_y + y) \Phi(x,h)/(k h) .
\]
Here $\Phi(x,y) = \alpha(x) \phi_1(y) + y O(\tau^4) = c_0 y \alpha(x)+y
O(\tau^4)$, and so
\[
\hat{\Phi}_y = u_y \left[ 1- \frac{c_0}{k} \alpha(x) + O(\epsilon) + O(y) \right]
+ \alpha(X)(c_0 + O(y)).
\]
Since $s<0$, formula \eqref{eqn:laminar} implies that $u_y < 0$ near the bottom,
whereas the expression in the square brackets is positive. Moreover, it was
established in the proof of Theorem~3 that the function $\alpha(x)$ attains
negative values for some $x$. Taking this into account, the last formula shows
that $\hat{\Phi}_y$ attains negative values near the bottom. More careful but
simple analysis yields that the set of streamlines of $\hat{\Phi}$ has the
structure shown in Figure 1 and it is essentially the same for $\psi$.

\vspace{2mm}
	
\noindent {\bf Acknowledgements.} V.~K. was supported by the Swedish Research
Council (VR), 2017-03837. N.~K. acknowledges the support from the Link\"oping
University.

\bibliographystyle{jfm}
\bibliography{bibliography}	
	
\end{document}